\let\oldvec\vec

\documentclass{llncs}

\let\vec\oldvec


\usepackage[ruled,vlined,linesnumbered,commentsnumbered]{algorithm2e}
\usepackage{amsmath,amssymb,amsfonts}

\begin{document}

\newtheorem{fact}[lemma]{Fact}

\title{A $(4+\epsilon)$-approximation for $k$-connected~subgraphs} 
\author{Zeev Nutov}
\institute{The Open University of Israel \ \email{nutov@openu.ac.il}}

\maketitle


\newcommand {\ignore} [1] {}

\def\kcs        {\sc $k$-Connected Subgraph}
\def\kcsa        {\sc $k$-Connectivity Augmentation}
\def\bfuc    {\sc Biset-Function Edge-Cover}

\def \CC   {{\cal C}}
\def \FF   {{\cal F}}

\def \al   {\alpha}
\def \be   {\beta}
\def \ga   {\gamma}
\def \de   {\delta}

\def \A    {\mathbb{A}}
\def \B    {\mathbb{B}}
\def \C    {\mathbb{C}}
\def \p    {\partial}

\def \empt {\emptyset}
\def \sem  {\setminus}
\def \subs {\subseteq}

\def \h    {\hat}
\def \b    {\bar}

\def \f    {\frac}

\begin{abstract}
We obtain approximation ratio $2(2+\f{1}{\ell})$ for the (undirected) {\kcs} problem, 
where $\ell \approx \f{1}{2} (\log_k n-1)$ is the largest integer such that $2^{\ell-1} k^{2\ell+1} \leq n$.
For large values of $n$ this improves the $6$-appro\-xi\-ma\-tion of Cheriyan and V\'egh \cite{CVegh} when $n =\Omega(k^3)$,
which is the case $\ell=1$. For $k$ bounded by a constant we obtain ratio $4+\epsilon$.
For large values of $n$ our ratio matches the best known ratio $4$ for the augmentation version of the problem \cite{N-C,N-A},
as well as the best known ratios for $k=6,7$ \cite{KN1}.
Similar results are shown for the problem of covering an arbitrary crossing supermodular biset function.
\end{abstract}

\section{Introduction}

A graph is {\bf $k$-connected} if it contains $k$ internally disjoint paths from every node to any other node.
We consider the following problem on undirected graphs:

\begin{center} \fbox{\begin{minipage}{0.965\textwidth} \noindent
\underline{\kcs} \\
{\em Input:} \ \ A graph $\h G=(V,\h E)$ with edge costs $\{c_e:e \in \h E\}$ and an integer $k$. \\
{\em Output:}   A minimum cost $k$-connected spanning subgraph of $\h G$. 
\end{minipage}}\end{center}

For undirected graphs, the problem is NP-hard for $k=2$ (the case $k=1$ is the {\sc MST} problem),
even when all edges in $\h G$ have unit costs.
This is since any feasible solution with $n=|V|$ edges is a Hamiltonian cycle.
For directed graphs the problem is NP-hard already for $k=1$, by a similar reduction. 
Let $\vec{\rho}(k,n)$ and $\rho(k,n)$ denote the best possible approximation ratios for the directed and undirected variants, respectively.
A standard bi-direction reduction gives $\rho(k,n) \leq 2\vec{\rho}(k,n)$,
while by \cite{LaN} $\rho(k,n) \geq \vec{\rho}(k-n/2,n/2)$ for $k \geq n/2+1$.
All in all we get that for $k \geq n/2+1$ the approximability of directed and undirected cases is the same, up to a $2$ factor.
This however does not exclude that the undirected case is easier when $n $ is much larger than $k$.

In the {\kcsa} problem $\h G$ contains a spanning $(k-1)$-connected subgraph of cost $0$.
A feasible solution for the {\kcs} problem can be obtained by solving sequentially {\sc $\ell$-Connectivity Augmentation} problems
for $\ell=1, \ldots k$, but this reduction usually invokes a factor of $H(k)$ in the ratio, where $H(k)$ denotes the $k$-th harmonic number.
Several ratios for {\kcs} were obtained in this way, c.f. \cite{CVV,KN2,FL,N-C}.
The currently best known ratios for the general and the augmentation version, 
for both directed and undirected graphs, are summarized in Table~1. 

\begin{table}[htbp] \small 
\begin{center} 
\begin{tabular}{|l|l||l|l|l|l|} \hline
\multicolumn{2}{|c||}{\em General}                                                                                                                              & \multicolumn{2}{|c|}{\em Augmentation}
\\\hline        
{\em Undirected}                                                              & {\em Directed}                                                                    & {\em Undirected}           & {\em Directed}
\\\hline
${O\left(\ln \frac{n}{n-k} \cdot \ln k\right)}$ \cite{N-C} & ${O\left(\ln \frac{n}{n-k} \cdot \ln k\right)}$ \cite{N-C}   & $2H(\mu)+1$ \cite{N-A} & $H(\mu)+3/2$ \cite{N-A} \\
$6$ if $n \geq k^3$ \cite{CVegh} (see also \cite{FNR})   &                                                                                              & $O(\ln(n-k))$ \cite{N-C} & $O(\ln(n-k))$ \cite{N-C}   \\
$\lceil(k+1)/2\rceil$ if $k \leq 7$ \cite{ADNP,DN,KN1}   & $k+1$ if $k \leq 6$ \cite{KN1}                                             &                                         & 
\\\hline
\end{tabular}
\newline 
\caption{\small
Known approximation ratios for {\kcs} and {\kcsa} problems.
Here $\mu= \left\lfloor\f{n}{\lfloor (n-k)/2 \rfloor +1}\right\rfloor$, and note that if $\mu=1$ then $k \in \{1,2\}$,
and that if $n \geq 3k-2$ then $\mu=2$ and $H(\mu)=3/2$.}
\end{center}
\vspace*{-0.5cm}
\end{table}

Note that we consider the {\em node-connectivity} version of the problem, for which classic techniques 
like the primal dual method \cite{GGPS} and the iterative rounding method \cite{Jain,FJW} 
do not seem to be applicable directly.
Ravi and Williamson \cite{RWe} gave an example of a {\kcsa} instance 
when the primal dual method has ratio $\Omega(k)$.
A related example of Aazami, Cheriyan and Laekhanukit \cite{ACL} rules out the iterative rounding method.
On the other hand, several works showed that the problem can be decomposed into
a small number of ``good''  sub-problems. 
However, attempts to achieve a constant ratio for $k = n-o(n)$ (e.g., for $k=n-\Theta(\sqrt{n})$) failed even in the easier augmentation case,
thus Cheriyan and V\'egh \cite{CVegh} suggested the following question: \\
{\em What ratio can we achieve when $n$ is lower bounded by a function of $k$?} \\
This essentially addresses the issue of ''asymptotic approximability'' -- as a function of the single parameter $k$, for all sufficiently large $n$. 
For undirected graphs Cheriyan and V\'egh \cite{CVegh} 
gave an elegant $6$ approximation when $n \geq k^4$, and this bound was slightly improved to $n \geq k^3$ in \cite{FNR}.

Note that the ``asymptotic approximability'' question seems almost settled for the augmentation version:
by \cite{N-C,N-A} for both directed and undirected graphs we have a constant ratio unless $k=n-o(n)$;
furthermore, for undirected graphs we have ratio $4$ for $n \geq 3k-2$ (ratio $3$ for directed graphs) \cite{N-A},
and this is also the best known ratios when $k=6,7$ for the general version \cite{KN1}.

From now and on we consider undirected graphs, unless stated otherwise.
Note that $4$ is a current ``lower bound'' on the ``asymptotic approximability'' of the problem, 
in the sense that no better ratio is known for much easier sub-problems.
Our main result shows that this ``lower bound'' is (almost) achievable.

\begin{theorem} \label{t:kcs}
{\kcs} admits approximation ratio $2(2+1/\ell)$ where $\ell$ is the largest integer such that $n \geq k[(k^2-1)(2k^2-3k+2)^{\ell-1}+1]$.
\end{theorem}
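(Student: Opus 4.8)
The strategy is to reduce the {\kcs} problem to the {\kcsa} problem --- for which ratio $4$ is known when the ``gap'' is small (specifically, ratio $4$ for $n \ge 3k-2$, and more generally $2H(\mu)+1$ via \cite{N-A}) --- by building the $k$-connected subgraph in a bounded number of connectivity ``jumps'' rather than one unit at a time. The naive sequential approach (augment from $\ell$-connectivity to $(\ell+1)$-connectivity for $\ell=1,\dots,k-1$) loses a factor $H(k)$; instead I would partition $\{1,\dots,k\}$ into $O(\ell)$ blocks and, within each block, jump from $t$-connectivity directly to some larger $t'$-connectivity in a single augmentation step whose cost can be charged to the optimum with only an $O(1)$ overhead. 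The key enabling fact is that when $n$ is large relative to the current connectivity $t$, augmenting connectivity by a multiplicative factor (say, doubling $t$, or going from $t$ to roughly $2t$) can be done with a constant-ratio approximation, because the relevant deficient bisets have small ``inner part'' or the parameter $\mu$ governing the $H(\mu)$ term stays bounded. The bound $n \ge k[(k^2-1)(2k^2-3k+2)^{\ell-1}+1]$ is exactly what one gets by tracking how the required lower bound on $n$ grows through $\ell$ successive such jumps: each jump squares (roughly) the connectivity and correspondingly raises the demand on $n$ by a factor like $2k^2-3k+2$, starting from the base case $n \ge k(k^2-1+1)=k^3$ for a single jump (matching the Cheriyan--V\'egh/\cite{FNR} regime).

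Concretely, I would proceed as follows. First, state and prove a ``one-jump'' lemma: given a $t$-connected spanning subgraph, one can augment to a $t'$-connected spanning subgraph (for appropriate $t' \le 2t$ or so) within ratio $4$ of the optimal cost of such an augmentation, provided $n$ exceeds a threshold that is polynomial in $t'$. This is where the biset-function machinery enters: the deficient biset family for node-connectivity augmentation is crossing supermodular, and for the augmentation one invokes the {\bfuc} results (the abstract promises ``similar results for covering an arbitrary crossing supermodular biset function''), so the one-jump lemma really lives at the level of biset functions. Second, chain the jumps: set $t_0 = 1$ (an MST), then $t_{i+1} \approx 2 t_i$ (or the precise recursion dictated by the $2k^2-3k+2$ factor) until reaching $t_m = k$ in $m = O(\ell)$ steps, where $\ell$ is as in the theorem. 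Third, sum the costs: each jump's augmentation cost is at most (ratio) times the cost of an optimal augmentation, which in turn is at most $2\,\mathrm{opt}$ or so (since restricting a $k$-connected graph to witness $t_{i+1}$-connectivity over a $t_i$-connected graph gives a feasible augmentation of cost $\le \mathrm{opt}$, up to the standard factor); multiplying the per-jump ratio $4$ by the number of jumps and dividing appropriately yields the claimed $2(2+1/\ell) = 4 + 2/\ell$.

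The main obstacle I expect is making the ``one-jump'' step genuinely achieve ratio $4$ (not merely $O(1)$) while simultaneously keeping the $n$-threshold as small as $\mathrm{poly}(k)$ per jump --- these two desiderata pull against each other, since a smaller threshold on $n$ means the bisets one must cover can have larger inner parts, which tends to inflate $\mu$ and hence $H(\mu)$. Getting the constant exactly $4$ presumably requires the sharp analysis of \cite{N-A} showing $H(\mu)=3/2$ (hence $2H(\mu)+1=4$) whenever the augmentation gap makes $\mu \le 2$, so the jump sizes $t_{i+1}$ must be chosen precisely so that the induced augmentation instance has $\mu \le 2$; verifying that the recursion $n \ge k[(k^2-1)(2k^2-3k+2)^{\ell-1}+1]$ is exactly the condition guaranteeing $\mu \le 2$ at every one of the $\ell$ jumps is the delicate bookkeeping at the heart of the proof. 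A secondary subtlety is the base case and the transition near $t=k$ itself (where $n-k$ may be small), which must be handled so as not to break the uniform ratio; I would treat the final jump separately if necessary, absorbing it into the $+1/\ell$ slack.
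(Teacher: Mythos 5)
There is a genuine gap, and it sits at the heart of your accounting step. If you perform $m$ successive connectivity ``jumps'' and each jump is solved within ratio $4$ of the optimal augmentation cost, the costs of the jumps \emph{add}: since the only generic upper bound on each augmentation optimum is ${\sf opt}$ itself (the optimal $k$-connected subgraph is feasible for every intermediate augmentation), the total is $4m\cdot{\sf opt}$, not $4+2/\ell$. There is no legitimate sense in which one can ``divide appropriately'' -- every jump must actually be paid for, so in your scheme the ratio \emph{worsens} as the number of jumps grows, which is the opposite of the theorem, where a larger $\ell$ (i.e., larger $n$ for fixed $k$) drives the ratio down toward $4$. Two further problems: (i) a jump from $t$-connectivity to $t'$-connectivity with $t'>t+1$ is not an instance of the {\kcsa} problem for which the ratio $2H(\mu)+1=4$ of \cite{N-A} is known -- that result concerns increasing connectivity by one (a zero-cost $(k-1)$-connected subgraph is present), and no constant ratio is known for multi-step jumps in general; (ii) the threshold $n \geq k[(k^2-1)(2k^2-3k+2)^{\ell-1}+1]$ has nothing to do with forcing $\mu\le 2$; in the paper it arises from a purely combinatorial growth bound on certain node sets, not from the harmonic-number analysis of augmentation.

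The paper's actual route is different in kind. It works directly with the symmetric crossing supermodular function $f_{k{\sf\mbox{-}CS}}$ and uses the two-phase ``independence-free'' strategy of Jackson--Jord\'{a}n and Cheriyan--V\'{e}gh: first buy a cheap edge set $J$ covering all $f$-positive bisets with small inner part, so that the residual function $f^J$ is positively skew-supermodular; then cover $f^J$ within $2\tau$ by the iterative rounding algorithm of \cite{FJW}. The novelty is how $J$ is found: the algorithm maintains a nested family $R_1 \subset R_2 \subset \cdots$, where each $R_{i+1}$ is obtained from $R_i$ by adding the union of small inner parts left uncovered, and Corollary~\ref{c:size} bounds the growth by the factor $2k_f^2-3k_f+2$ -- this geometric growth is exactly what produces the lower bound on $n$ needed to sustain $\ell$ iterations. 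Each iteration produces a candidate $J_i$ consisting of covers of two \emph{area functions} (one for $R_i$, one for $V\sem R_{i+1}$), each solvable near-optimally by directing edges and applying Frank's algorithm; a telescoping-sum argument shows $\sum_{i=1}^{\ell} c(J_i) \leq 2\tau(\ell+1)$, so the cheapest candidate costs at most $2\tau(1+1/\ell)$. Only that single cheapest pair is bought, and adding the final $2\tau$ for the residual cover gives $2\tau(2+1/\ell)$. So the improvement over the ratio $6$ of \cite{CVegh} comes from averaging over $\ell$ \emph{candidate} first-phase covers and keeping one, not from chaining several augmentations -- which is precisely the mechanism your proposal is missing.
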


Note that $\ell \approx \f{1}{2} (\log_k n-1)$ and that Theorem~\ref{t:kcs} implies 
ratio $5$ for $n \geq 2k^5$ and ratio $4+\epsilon$ if $k$ is bounded by a constant. 
In fact, we prove a generalization of Theorem~\ref{t:kcs}, stated in biset function terms, see the next section.

We note that our result can be used to improve approximation ratios for the {\sc Min-Cost Degree Bounded} {\kcs} problem, see \cite{FNR,EV}.

We refer the reader to \cite{N-sn,N-kcs} for surveys on approximation algorithms for node-connectivity problems,
and to  \cite{Frank-book,FK-survey} for a survey on polynomially solvable cases. 
Here we briefly mention the status of some restricted {\kcs} problems.

\begin{table}[htbp]   \small 
\begin{center} 
\begin{tabular}{|l|l|l|}                    
\hline        
{\em Costs}    & \hspace{1.2cm} {\em Undirected}                                                                                      & \hspace{1.2cm} {\em Directed}                                      
\\\hline
\hline
$\{0,1\}$         & $\min\{2,1+\frac{k^2}{2{\sf opt}}\}$ \cite{FJ,JJ1}                                                               & in P \cite{FJ}    
\\\hline
$\{1,\infty\}$ & $1-\frac{1}{k}+\frac{n}{\sf opt} \leq 1+\frac{1}{k}$ \cite{CT00} (see also \cite{N-small})  & $1-\frac{1}{k}+\frac{2n}{\sf opt} \leq 1+\frac{1}{k}$ \cite{CT00} (see also \cite{N-small}) 
\\\hline
metric            & $2+(k-1)/n$ \cite{KN1}                                                                                                          & $2+k/n$ \cite{KN1}                
\\\hline
\end{tabular}
\newline 
\caption{\small
Known approximation ratios of {\kcs} problems.}
\end{center} \vspace*{-0.5cm}
\end{table}

We may assume that the input graph $\h G$ is complete, by assigning infinite costs to ``forbidden'' edges. 
Under this assumption, except for general edge costs, three main types of costs are considered in the literature:
\begin{itemize}
\item
{$\{0,1\}$-costs:}  
Here we are given a graph $G$, and the goal is to find 
a minimum size set $J$ of new edges (any edge is allowed) such that $G \cup J$ is $k$-connected. 
\item
{$\{1,\infty\}$-costs:} 
Here we seek a $k$-connected spanning subgraph of $\h G$ with minimum number of edges. 
\item
{metric costs:} 
The costs satisfy the triangle inequality $c_{uv} \leq c_{uw}+c_{wv}$ for all $u,w,v \in V$.
\end{itemize}

The currently best known appro\-xi\-mation ratios for these costs types are summarized in Table~2,
and we mention some additional results.
For $\{0,1\}$-costs the complexity status of the problem is not known for undirected graphs, 
but for any constant $k$ an optimal solution can be computed in polynomial time \cite{JJ2}. 
When $\h G$ contains a spanning $(k-1)$-connected subgraph of cost $0$
the $\{0,1\}$-costs case can be solved in polynomial time for any $k$ \cite{Vegh}.
In the case of $\{1,\infty\}$-costs, directed $1$-{\sc Connected Subgraph} admits ratio $3/2$ \cite{Vetta-scs}.
In the case of metric costs $2$-{\sc Connected Subgraph} admit ratio $3/2$ \cite{FJ82}.

\section{Biset functions and $k$-connectivity problems} \label{s:bisets}

While edge-cuts of a graph correspond to node subsets, a natural way to
represent a node-cut of a graph is by a pair of sets called a ``biset''.

\begin{definition}
An ordered pair $\A=(A,A^+)$ of subsets of $V$ with $A \subs A^+$ is called a {\bf biset}; 
$A$ is the {\bf inner part} and $A^+$ is the {\bf outer part} of $\A$,
and $\p\A=A^+ \sem A$ is the {\bf boundary} of $\A$. 
The {\bf co-set} of $\A$ is $A^*=V \sem A^+$; the {\bf co-biset} of $\A$ is $\A^*=(A^*,V \sem A)$. 
We say that $\A$ is {\bf void} if $A=\empt$, {\bf co-void} if $A^+=V$ (namely, if $A^*=\empt$), and $\A$ is {\bf proper} otherwise.
\end{definition}

A {\bf biset function} assigns to every biset $\A$ a real number; in our context, it will always be an integer (possibly negative).

\begin{definition}
An {\bf edge covers a biset} $\A$ if it goes from $A$ to $A^*$.
For an edge-set/graph $J$ let $\delta_J(\A)$ denote the set of edges in $J$ covering $\A$. 
The {\bf residual function} of a biset function $f$ w.r.t. $J$ is defined by $f^J(\A)=f(\A)-|\de_J(\A)|$. 
We say that $J$ {\bf $f$-covers} $\A$ if $|\de_J(\A)| \geq f(\A)$, and we say that
{\bf $J$ covers $f$} or that $J$ is an {\bf $f$-cover} if $|\de_J(\A)| \geq f(\A)$ for all $\A$.
\end{definition}

In biset terms, Menger's Theorem says that the maximum number of internally disjoint $st$-paths in $G$
equals to $\min\{|\p\A|+|\de_G(\A)|:s \in A,t \in A^*\}$.
Consequently, $G$ is $k$-connected if and only if $|\de_G(\A)| \geq k-|\p\A|$ for every proper biset $\A$;
note that non-proper bisets cannot and are not required to be covered.
Thus $G$ is $k$-connected if and only if $G$ covers the {\bf $k$-connectivity biset function} $f_{k{\sf\mbox{-}CS}}$
defined by 
\begin{equation*} 
f_{k{\sf\mbox{-}CS}}(\A)= 
\left \{ \begin{array}{ll}
               k-|\p\A| \ \ \     & \mbox{if } \A \mbox{ is proper} \\
               0                  & \mbox{otherwise}
\end{array} \right .
\end{equation*}
We thus will consider the following generic problem:

\begin{center} \fbox{\begin{minipage}{0.960\textwidth} \noindent
\underline{\bfuc} \\
{\em Input:} \ A graph $\h G=(V,\h E)$ with edge costs and a biset function $f$ on $V$. \\
{\em Output:} A minimum cost edge-set $E \subs \h E$ that covers $f$.
\end{minipage}}\end{center}

Here $f$ may not be given explicitly, and an efficient implementation  
of algorithms requires that certain queries related to $f$ can be answered in time polynomial in $n$.
We will consider later implementation details.
In the application discussed here, relevant polynomial time oracles are available via min-cut computations. 
In particular, we have a polynomial time separation oracle for the LP-relaxation
due to Frank and Jord\'{a}n~\cite{FJ}: 
\[\begin{array} {llllllll} 
&                                                              & \tau(f)= & \min           & c \cdot x                                                              &                                             &\\
& \mbox{\bf(Biset-LP)} \hspace{2.0cm} &               &\mbox{s.t.} & x(\delta_{\h E}(\A)) \geq f(\A) \hspace{0.3cm} & \forall \A  \hspace{5.0cm} &\\
&                                                              &               &                    & 0 \leq x_e \leq 1                                                 & \forall e                              & 
\end{array}\]

This LP is particularly useful if the biset function $f$ has good uncrossing/supermodularity properties.
To state these properties, we need to define the intersection and the union of bisets.

\begin{definition} \label{d:bisets}
The {\bf intersection} and the {\bf union} of two bisets $\A,\B$ are defined by
$\A \cap \B = (A \cap B, A^+ \cap B^+)$ and $\A \cup \B=(A \cup B,A^+ \cup B^+)$.
The biset $\A \sem \B$ is defined by $\A \sem \B=(A \sem B^+,A^+ \sem B)$.
We say that $\A,\B$ {\bf intersect} if $A \cap B \neq \empt$, and 
{\bf cross} if $A \cap B \neq \empt$ and $A^+ \cup B^+ \neq V$.
\end{definition}

The following properties of bisets are easy to verify. 

\begin{fact} \label{f:AB}
For any bisets $\A,\B$ the following holds.
If a directed/undirected edge $e$ covers one of $\A \cap \B,\A \cup \B$ then $e$ covers one of $\A,\B$; 
if $e$ is an undirected edge, then if $e$ covers one of $\A \sem \B,\B \sem \A$, then $e$ covers one of $\A,\B$.
Furthermore                    
$|\p\A|+|\p\B| = |\p(\A \cap \B)|+|\p(\A \cup \B)|=|\p(\A \sem B)|+|\p(\B \sem \A)|$.
\end{fact}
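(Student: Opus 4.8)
The plan is to verify each of the three assertions by a direct case analysis, using only the set-theoretic definitions of $\A \cap \B$, $\A \cup \B$, and $\A \sem \B$ together with the definition of "an edge covers a biset". Recall that an edge $e=xy$ covers $\A$ iff one endpoint lies in $A$ and the other lies in $A^*=V\sem A^+$; equivalently, iff $\{x,y\}$ meets $A$ and $\{x,y\}$ misses $A^+$. So throughout I will track, for each endpoint, which of the sets $A, A^+, B, B^+$ (and their complements) it belongs to.

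For the first claim, suppose $e=xy$ covers $\A\cap\B=(A\cap B, A^+\cap B^+)$; without loss of generality $x\in A\cap B$ and $y\notin A^+\cap B^+$. Then $x\in A$ and $x\in B$, while $y\notin A^+$ or $y\notin B^+$; in the former case $e$ covers $\A$, in the latter $e$ covers $\B$. Next suppose $e$ covers $\A\cup\B=(A\cup B, A^+\cup B^+)$; without loss of generality $x\in A\cup B$ and $y\notin A^+\cup B^+$. Then $x\in A$ or $x\in B$, while $y\notin A^+$ and $y\notin B^+$; again $e$ covers $\A$ in the first case and $\B$ in the second. For the third claim, about $\A\sem\B=(A\sem B^+, A^+\sem B)$, the point where undirectedness matters: suppose the undirected edge $e=xy$ covers $\A\sem\B$, say $x\in A\sem B^+$ and $y\notin A^+\sem B$, i.e. $y\in (V\sem A^+)\cup B$. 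If $y\in V\sem A^+$ then together with $x\in A$ we get that $e$ covers $\A$. If instead $y\in B$, then since $x\notin B^+$ (as $x\in A\sem B^+$), the endpoint $y\in B$ and the endpoint $x\notin B^+$ show that $e$ covers $\B$ — here we used that for an undirected edge the roles of the two endpoints are symmetric, so "$x\in A$, $y\in B$" suffices. The case where $e$ covers $\B\sem\A$ is symmetric by swapping the roles of $\A$ and $\B$.

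For the cardinality identities, write $|\p\A|=|A^+|-|A|$ and similarly for the other bisets, and apply the standard inclusion–exclusion identities for finite sets. For the first identity, $|\p(\A\cap\B)|+|\p(\A\cup\B)| = \big(|A^+\cap B^+|+|A^+\cup B^+|\big) - \big(|A\cap B|+|A\cup B|\big) = \big(|A^+|+|B^+|\big) - \big(|A|+|B|\big) = |\p\A|+|\p\B|$. For the second, note $A\sem B^+$ and $A^+\sem B$ are each subsets of $V$, and since $A\subs A^+$ and $B\subs B^+$ one checks $|A^+\sem B| - |A\sem B^+| = \dots$; the cleanest route is to observe that $\A\sem\B$ and $\B\sem\A$ are the bisets obtained from $\A\cap\B^*$ and $\B\cap\A^*$ where $\A^*=(V\sem A^+, V\sem A)$, and that $|\p\A^*|=|\p\A|$, so this identity reduces to the first one applied to $\A$ and $\B^*$. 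Alternatively one expands everything directly: $\big(|A^+\sem B|-|A\sem B^+|\big)+\big(|B^+\sem A|-|B\sem A^+|\big)$ and regroups using $|X\sem Y|=|X|-|X\cap Y|$ together with $|A\cap B^+|+|A^+\cap B| = \dots$ — a short bookkeeping computation that I will omit here.

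I do not anticipate a genuine obstacle; the only point requiring care is the asymmetry in the statement — the claim about $\A\sem\B$ holds only for undirected edges, and the proof above makes clear exactly where that hypothesis is used (the subcase $y\in B$, $x\notin B^+$, which witnesses a cover of $\B$ only if we may freely interchange the two endpoints). For a directed edge $e=(x,y)$, covering $\A\sem\B$ in that subcase gives the "wrong orientation" relative to $\B$, which is why the directed analogue is stated only for $\A\cap\B$ and $\A\cup\B$.
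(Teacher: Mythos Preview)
Your argument is correct. The paper does not actually prove Fact~\ref{f:AB}; it merely introduces it with ``The following properties of bisets are easy to verify,'' so there is no proof in the paper to compare against. Your case analysis for the edge-covering claims is clean and correctly pinpoints where undirectedness is needed (the subcase $y\in B$, $x\notin B^+$ when $e$ covers $\A\sem\B$). For the boundary identities, your first computation via $|\p\A|=|A^+|-|A|$ and inclusion--exclusion is exactly right. For the second, your reduction ``apply the first identity to $\A$ and $\B^*$'' is correct but you elide one small step: the first identity gives $|\p(\A\cap\B^*)|+|\p(\A\cup\B^*)|=|\p\A|+|\p\B^*|$, and you need that $\B\sem\A=\B\cap\A^*$ is the \emph{co-biset} of $\A\cup\B^*$ (hence has the same boundary), not merely that it equals $\B\cap\A^*$. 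That missing link is a one-line check, so the proof stands.
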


For a biset function $f$ and bisets $\A,\B$ the {\bf supermodular inequality} is
$$
f(\A \cap \B) +f(\A \cup \B) \geq f(\A)+f(\B) \ .
$$
We say that a biset function $f$ is {\bf supermodular} if the supermodular inequality holds for all $\A,\B$,
and {\bf modular} if the supermodular inequality holds as equality for all $\A,\B$;
$f$ is {\bf symmetric} if $f(\A)=f(\A^*)$ for all $\A$.
Using among others Fact~\ref{f:AB}, one can see the following.
\begin{itemize}
\item
For any directed/undirected graph $G$ the function $-d_G(\cdot)$ is supermodular. 
\item 
The function $|\p(\cdot)|$ is modular.
\item
For any $R \subs V$ the function $|A \cap R|$ is modular.  
\end{itemize}

We say that a biset $\A$ is {\bf $f$-positive} if $f(\A)>0$.
Some important types of biset functions are given in the following definition.

\begin{definition} \label{d:types}
A biset function $f$ is {\bf intersecting/crossing supermodular} if the supermodular inequality holds 
whenever $\A,\B$ intersect/cross;
$f$ is {\bf positively intersecting supermodular} if the supermodular inequality holds for any pair of intersecting $f$-positive bisets.
\end{definition}

{\bfuc} with positively intersecting supermodular $f$ admits a polynomial time algorithm due to Frank \cite{F-R}
that for directed graphs computes an $f$-cover of cost $\tau(f)$  (this also can be deduced using the iterative rounding method); 
for undirected graphs the cost is at most $2\tau(f)$, by a standard bi-direction reduction.
Note however that the function $f_{k{\sf\mbox{-}CS}}$ that we want to cover is obtained by zeroing an intersecting supermodular
function on co-void bisets, but $f_{k{\sf\mbox{-}CS}}$ itself is not positively intersecting supermodular.

In general, changing a supermodular function on void bisets gives an intersecting supermodular function, while
changing an intersecting supermodular function on co-void bisets gives a crossing supermodular function
(not all crossing supermodular functions arise in this way -- see \cite{FK-survey}).
In particular, zeroing a supermodular function on non-proper bisets gives a crossing supermodular function.
For example, the $k$-connectivity function $f_{k{\sf\mbox{-}CS}}$ is obtained in this way  
from the modular function $k-|\p\A|$, thus $f_{k{\sf\mbox{-}CS}}$ is crossing supermodular.

A common way to find a ``cheap'' partial cover of $f_{k{\sf\mbox{-}CS}}$ is to find a $2$-approximate cover of the 
{\bf fan function} $g_R$ obtained by zeroing the function $k-|\p\A|-|A \cap R|$ on void bisets, where $R \subs V$ with $|R|=k$.
Note that $g_R$ is intersecting supermodular and that $g_R$ is non-positive on co-void bisets (e.g., $g_R((V,\empt))=k-0-k=0$).
Fan functions were used in many previous works on {\kcs} problems 
starting from Khuller and Raghavachari \cite{KR}, and also by Cheriyan and V\'{e}gh \cite{CVegh}. 
In fact, covering $g_R$ is equivalent to the following connectivity problem.
Let us say that a graph is {\bf $k$-in-connected to $r$} if it has $k$ internally disjoint $vr$ paths for every $v \in V$.
Construct a graph $G_r$ by adding to $\h G$ a new node $r$ and a set $F_r$ of zero cost edge from each $v \in R$ to $r$; 
then $H=(V \cup \{r\},J_r)$ is a $k$-in-connected to $r$ spanning subgraph of $G_r$ if and only if $J=J_r \sem F_r$ covers $g_R$.
The problem of finding an optimal $k$-in-connected spanning subgraph can be solved in strongly polynomial time 
for directed graphs \cite{FT} (see also \cite{F-R}), and this implies a $2$-approximation for undirected graphs.

Fan functions are considered as the ``strongest'' intersecting supermodular functions for the purpose of finding a partial cover of $f_{k{\sf\mbox{-}CS}}$.
However, an inclusion minimal directed cover $J$ of a fan function may be difficult to decompose,
since $J$ may have directed edges with tail in $R$;
this is so since a fan function requires to cover to some extent bisets $\A$ with $A \cap R\neq \empt$.
We therefore use a different type of functions defined below, that are ``weaker'' but have ``better'' decomposition properties.

For $R \subs V$ the {\bf area function} of $f$ is defined by 
$$
f_R(\A)=f(\A)-\max_\A f(\A) \cdot |A \cap R| \ .
$$
Note that $f_R(\A)=f(\A)$ if $A \subs V \sem R$ and $f_R(\A) \leq 0$ otherwise,
so $f_R$ requires to $f$-cover only those bisets whose inner part is contained in the ``area'' $V \sem R$.
In the next two lemmas we give some properties of area functions. Let us denote
$$k_f = 1+\max\{|\p\A|:f(\A)>0\} \ .$$ 

\begin{lemma}
If $|R| \geq k_f$ then: $f_R$ is non-positive on co-void bisets,
$f_R$ is intersecting supermodular if $f$ is, and 
$f_R$ is positively intersecting supermodular if $f$ is crossing supermodular and $|R| \geq 2k_f-1$.
\end{lemma}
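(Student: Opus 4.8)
The plan is to write $f_R=f-M\cdot h$ with $M=\max_\A f(\A)$ and $h(\A)=|A\cap R|$, and to reduce all three assertions to a single observation: \emph{every $f_R$-positive biset $\A$ satisfies $|\p\A|\le k_f-1$ and $A\cap R=\empt$.} Indeed, if $f_R(\A)>0$ then $f(\A)\ge f_R(\A)>0$ (as $M\ge 0$), so $|\p\A|\le k_f-1$ by the definition of $k_f$; and if moreover $A\cap R\ne\empt$, then $|A\cap R|\ge 1$, whence $f_R(\A)\le M-M\cdot 1=0$ --- a contradiction --- so $A\cap R=\empt$. (We tacitly assume $M\ge 0$, as is the case in our applications; if $M=0$ then $f_R\le 0$ everywhere and all three claims are vacuous or trivial, so we may assume $M\ge 1$.)

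The second assertion requires nothing beyond the fact, noted in the excerpt, that $h$ is modular: since $M\ge 0$, subtracting $M\cdot h$ alters the two sides of the supermodular inequality by exactly the same amount, hence preserves it on every family of pairs; in particular intersecting supermodularity of $f$ is inherited by $f_R$ (and this needs no assumption on $|R|$). For the first assertion, take a co-void biset $\A$, so $\p\A=V\sem A$; if $f(\A)\le 0$ then $f_R(\A)\le 0$ is immediate, and otherwise $|\p\A|\le k_f-1$, so $|A\cap R|\ge|R|-|\p\A|\ge k_f-(k_f-1)=1$ and $f_R(\A)=f(\A)-M|A\cap R|\le M-M=0$.

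For the third assertion, let $\A,\B$ be intersecting and $f_R$-positive; the goal is to show that $\A,\B$ cross, after which crossing supermodularity of $f$ yields the supermodular inequality for $\A,\B$, and subtracting the modular $M\cdot h$ (exactly as above) yields it for $f_R$. By the observation above, $A\cap R=B\cap R=\empt$ and $|\p\A|,|\p\B|\le k_f-1$. Suppose, for contradiction, that $A^+\cup B^+=V$. Then each $r\in R$ lies outside $A\cup B$ (since $R$ meets neither $A$ nor $B$) but inside $A^+\cup B^+$, hence $r\in\p\A\cup\p\B$; thus $|R|\le|\p\A|+|\p\B|\le 2(k_f-1)$, contradicting $|R|\ge 2k_f-1$. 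Hence $A^+\cup B^+\ne V$ and $\A,\B$ cross, which completes the proof.

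I do not expect a genuine obstacle here: the content is concentrated in the single observation of the first paragraph, after which each of the three claims is a one-line counting estimate. The two points that need a little care are excluding the degenerate case $M=0$ before ``dividing by $M$'', and isolating the remark that subtracting a modular function preserves the supermodular inequality, so that it can be reused both for intersecting and for crossing pairs.
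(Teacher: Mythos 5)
Your proof is correct and follows essentially the same route as the paper: the key step in both is the counting argument that for intersecting $f_R$-positive bisets $\A,\B$ one has $A\cap R=B\cap R=\empt$ and $|\p\A\cup\p\B|\leq 2(k_f-1)<|R|$, forcing $\A,\B$ to cross so that crossing supermodularity of $f$ applies. The only cosmetic difference is that you transfer the inequality to $f_R$ by subtracting the modular function $M\cdot|A\cap R|$, whereas the paper simply observes that $f_R=f$ on $\A,\B,\A\cap\B,\A\cup\B$; these are equivalent.
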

\begin{proof}
The first two statements are easy, so we prove only the last statement. Let $\A,\B$ be intersecting $f_R$-positive bisets.
Then $A \cap R=B \cap R=\empt$, and thus $(A \cap B) \cap R=(A \cup B) \cap R=\empt$.
Consequently, $f_R=f$ on the bisets $\A,\B,\A \cap \B,\A \cup \B$. 
Moreover, $A^* \cap B^* \cap R \neq \empt$, since $|\p\A \cup \p\B| < 2(k_f-1)+1 \leq |R|$. 
Thus $\A,\B$ must cross, and since $f$ is crossing supermodular
$$
f_R(\A)+f_R(\B) = f(\A)+f(\B) \leq f(\A \cap \B)+f(\A \cup \B) = f_R(\A \cap \B)+f_R(\A \cup \B) \ .
$$
Consequently, the supermodular inequality holds for $\A,\B$ and $f_R$.
\qed
\end{proof}

For $S \subs V$ let $\ga(S)$ denote the set of edges in $\h E$ with both end in $S$. 
Consider the following algorithm for covering $f_R$. 

\medskip \medskip

\begin{algorithm}[H]
\caption{{\sc Area-Cover$(\h{G},c,f,R)$}} 
\label{alg:AC} 
bidirect the edges in $\ga(V \sem R)$ and direct into $R$ the edges in $\de(R)$ \\ 
compute an optimal directed edge-cover $I'$ of $f_R$  \\  
return the underlying undirected edge set $I$ of $I'$
\end{algorithm}

\medskip \medskip

If $f_R$ is positively intersecting supermodular, then step~2 in the algorithm
can be implemented in polynomial time if the Biset-LP for $f_R$ can be solved in polynomial time.
In many specific cases strongly polynomial algorithms are available. 
E.g., if $f$ is obtained by zeroing the function $k-|\p\A|$ on void bisets 
then we can use the Frank-Tardos algorithm \cite{FT} 
or the algorithm of Frank \cite{F-R} for finding a directed min-cost $k$-in-connected subgraph -- 
in the above reduction described for covering a fan function $g_R$, 
the edge set $F_r$ will have $k=\max_\A f(\A)$ parallel directed edges from each $v \in R$ to the root $r$.

The following lemma relates the cost of the solution computed by Algorithm~\ref{alg:AC} to the Biset-LP value.

\begin{lemma} \label{l:area}
Let $f_R$ be positively intersecting supermodular and let $x$ be a feasible Biset-LP solution for covering $f_R$.
Then Algorithm~\ref{alg:AC} returns an $f_R$-cover $I$ of cost
$c(I) \leq \sum_{e \in \de(R)}c_ex_e+2\sum_{e \in \ga(V \sem R)}c_ex_e$.
\end{lemma}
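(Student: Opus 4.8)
The plan is to reduce the claim to Frank's algorithm for directed positively intersecting supermodular biset functions. Write $\vec{\h G}$ for the directed graph built in step~1 of Algorithm~\ref{alg:AC}: every edge $e \in \ga(V \sem R)$ yields two opposite arcs, every edge of $\de(R)$ yields a single arc directed toward its endpoint in $R$, and the edges of $\ga(R)$ are discarded. From the given feasible Biset-LP solution $x$ for $f_R$ I would lift $x$ to a fractional point $\vec x$ on the arcs of $\vec{\h G}$ by putting the value $x_e$ on each of the two arcs coming from $e \in \ga(V \sem R)$ and the value $x_e$ on the single arc coming from $e \in \de(R)$. Then $c \cdot \vec x = \sum_{e \in \de(R)} c_e x_e + 2\sum_{e \in \ga(V \sem R)} c_e x_e$ by construction, which is exactly the target bound, so it remains to check that $\vec x$ is feasible for the directed Biset-LP of $f_R$ on $\vec{\h G}$ and that the returned set $I$ is an $f_R$-cover of cost at most $c\cdot\vec x$.

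For feasibility only $f_R$-positive bisets $\A$ need attention, since $\vec x \geq 0$. By the first property of area functions, $A \cap R = \empt$, i.e. $A \subs V \sem R$. The key claim is that the undirected edges of $\h E$ covering $\A$ are in a value-preserving bijection with the arcs of $\vec{\h G}$ covering $\A$. Indeed, if an undirected edge $e$ covers $\A$, say with endpoints $u \in A$ and $v \in A^* = V \sem A^+$, then $u \in V \sem R$, so $e \notin \ga(R)$; if $e \in \ga(V \sem R)$ exactly the arc $(u,v)$ among its two arcs covers $\A$ (the reverse one cannot, since $A \cap A^* = \empt$), and if $e \in \de(R)$ then $v$ is its $R$-endpoint, so its unique arc is $(u,v)$, which covers $\A$. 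Conversely every arc covering $\A$ projects to an undirected edge covering $\A$, and this projection is injective because at most one arc of a bidirected pair covers a fixed $\A$. Hence $\vec x(\de_{\vec{\h E}}(\A)) = x(\de_{\h E}(\A)) \geq f_R(\A)$.

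It then remains to invoke Frank's theorem \cite{F-R}: since $f_R$ is positively intersecting supermodular, step~2 returns a directed $f_R$-cover $I'$ of minimum cost, equal to the optimum of the directed Biset-LP on $\vec{\h G}$, which by feasibility of $\vec x$ is at most $c \cdot \vec x$. Finally, the underlying undirected set $I$ covers $f_R$: for every biset $\A$, mapping each arc of $I'$ covering $\A$ to its underlying edge is injective (again, at most one arc of a bidirected pair covers $\A$), so $|\de_I(\A)| \geq |\de_{I'}(\A)| \geq f_R(\A)$; and $c(I) \leq c(I') \leq c \cdot \vec x$, which is the asserted inequality.

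The biset bookkeeping is routine; the one point that really carries the argument is the choice of orientation in step~1. \emph{Directing the cut edges $\de(R)$ into $R$} is precisely what keeps an undirected edge covering an $f_R$-positive biset covering after orientation — its inner-part endpoint becomes the tail and lies outside $R$, while its $R$-endpoint becomes the head and lies in the co-set. With the opposite orientation the lifted $\vec x$ would no longer be feasible, so this is the crux; the rest is Frank's theorem plus weak LP duality.
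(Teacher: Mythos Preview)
Your argument is correct and follows essentially the same route as the paper: lift $x$ to the directed graph, observe that every $f_R$-positive biset has $A \subs V \sem R$ so the orientation of $\de(R)$ into $R$ preserves coverage, and then invoke Frank's result for positively intersecting supermodular functions to conclude $c(I) \leq c(I') \leq c \cdot \vec x$. The paper phrases the lift slightly differently (bidirecting all of $\h E$ and then zeroing the values on arcs leaving $R$) and is terser about the bijection and about why $I$ itself covers $f_R$, but the substance is identical.
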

\begin{proof}
Edges in $\ga(R)$ do not cover $f_R$-positive bisets, hence they can be removed.
Let $E'$ be the bi-direction of $\h E$, where each undirected edge $e$ with ends $u,v$ is 
replaced by two arcs $uv,vu$ of cost $c_e$ and value $x_e$ each. 
Let $x'$ be be obtained by zeroing the value of arcs leaving $R$; these arcs do not cover $f$-positive bisets.
We claim that:
$$
c(I) \leq c(I') \leq \sum_{e \in E'} c_e x'_e  =\sum_{e \in \de(R)}c_ex_e+2\sum_{e \in \ga(V \sem R)}c_ex_e
$$
The first inequality is obvious. The second inequality is since $f_R$ is positively intersecting supermodular and 
since $x'$ is a directed feasible Biset-LP solution for $f_R$ while $I'$ is an optimal one.
The equality is by the construction.
\qed
\end{proof}

Assuming that for any residual function of $f^I$ of $f$, Algorithm~\ref{alg:AC} can be implemented in polynomial time 
whenever $f^I_R$ is positively intersecting supermodular, 
and that the Biset-LP for covering $f^I$ can be solved in polynomial time, 
we prove the following theorem that implies Theorem~\ref{t:kcs}.

\begin{theorem} \label{t:bfuc}
Undirected {\bfuc} with symmetric cros\-sing supermodular $f$ admits approximation ratio $2(2+1/\ell)$,
where $\ell$ is the largest integer such that:
\begin{itemize}
\item
$n \geq (2k_f-1)[(2k_f^2-3k_f+2)^\ell +1]$ for symmetric crossing supermodular $f$.
\item
$n \geq k_g[(2k_g^2-3k_g+2)^\ell+1]$ if $f$ is obtained by zeroing an intersecting supermodular function $g$ on co-void bisets, 
\item
$n \geq k[(k^2-1)(2k^2-3k+2)^{\ell-1}+1]$ if $f=f_{k{\sf\mbox{-}CS}}$. 
\end{itemize}
\end{theorem}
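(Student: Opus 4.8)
\medskip
\noindent\emph{Proof strategy.} I would cover $f$ by a bounded number of calls to Algorithm~\ref{alg:AC} on successive residual functions. Start with $E=\empt$ and $f_1=f$. In round $i$ pick a set $R_i$ of size at least $2k_f-1$ so that, by the earlier lemma, the area function $(f_i)_{R_i}$ of the current residual $f_i$ is positively intersecting supermodular; run Algorithm~\ref{alg:AC} on $(\h G\sem E,c,f_i,R_i)$ to get $I_i$, put $E\leftarrow E\cup I_i$ and $f_{i+1}=f_i^{I_i}$. Each $f_{i+1}$ is again symmetric (its edge part $I_i$ is undirected) and crossing supermodular ($f_{i+1}$ is a crossing supermodular function plus the supermodular function $-|\de_{I_i}(\cdot)|$), so the step repeats; also $k_{f_{i+1}}\le k_f$, and since $I_i$ covers $(f_i)_{R_i}$, every $f_{i+1}$-positive biset $\A$ satisfies $A\cap R_i\neq\empt$, hence by symmetry $A^*\cap R_i\neq\empt$ as well.

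The structural heart of the argument is to show that one such round genuinely contracts the uncovered part of the problem. Taking the directed cover computed in step~2 of Algorithm~\ref{alg:AC} to be inclusion-minimal, and combining the supermodular uncrossing of Fact~\ref{f:AB} with the earlier lemma, one shows that the maximal $f_{i+1}$-positive bisets have pairwise disjoint inner parts, each of boundary size at most $k_f-1$ and each meeting $R_i$; in particular there are at most $2k_f-1$ of them. Using symmetry of $f_{i+1}$ once more, the still-uncovered part of the instance localizes inside these few bisets, and a counting argument shows that the relevant ``size'' of what remains decreases by the factor $2k_f^2-3k_f+2=(2k_f-1)(k_f-1)+1$ per round. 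Consequently the process can run for $\ell$ rounds and finish precisely when $n\ge(2k_f-1)[(2k_f^2-3k_f+2)^\ell+1]$, at which point $f_{\ell+1}$ has no positive biset and $E$ covers $f$. For the special cases one feeds in sharper inputs: if $f$ is obtained by zeroing an intersecting supermodular $g$ on co-void bisets then $(f_i)_{R_i}$ is already intersecting supermodular once $|R_i|\ge k_g$, which replaces $2k_f-1$ by $k_g$; and for $f=f_{k{\sf\mbox{-}CS}}$ the first round may instead cover a fan function via the Frank--Tardos min-cost $k$-in-connected subgraph algorithm, which needs only $|R|=k$ and contracts by the larger factor $k^2-1$, producing the $(k^2-1)(2k^2-3k+2)^{\ell-1}$ form and hence Theorem~\ref{t:kcs}.

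For the cost, I would invoke Lemma~\ref{l:area} in round $i$ with the residual Biset-LP solution $x^{(i)}$ obtained from a fixed optimal $x$ for $f$ by zeroing the coordinates of edges already taken; this $x^{(i)}$ is feasible for $(f_i)_{R_i}$ with $c\cdot x^{(i)}\le\tau(f)$, so $c(I_i)\le\sum_{e\in\de(R_i)}c_e x^{(i)}_e+2\sum_{e\in\ga(V\sem R_i)}c_e x^{(i)}_e$. The point is to choose the sets $R_i$ (guided by the contracting structure above) so that, summing these $\ell$ bounds, each edge of positive $x$-value is charged the costly coefficient $2$ in only boundedly many rounds while the $\de(R_i)$-terms over all rounds contribute at most a $1/\ell$-fraction of $\tau(f)$; the total then telescopes to $2(2+1/\ell)\,\tau(f)$. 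The bi-direction inside Algorithm~\ref{alg:AC} and the stated oracle/LP hypotheses keep everything polynomial.

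I expect the decisive obstacle to be the structural contraction step. The bare observation that every residual positive biset meets $R_i$ is by itself too weak -- iterating it with disjoint or with nested $R_i$ provably fails to shrink the support -- so one must exploit the laminar-type structure of an inclusion-minimal directed cover together with symmetry of $f$ and the boundary bound $|\p\A|\le k_f-1$ to obtain the clean geometric rate $2k_f^2-3k_f+2$, and then align the choice of the $R_i$ with the cost accounting so that the per-round estimates of Lemma~\ref{l:area} add up to exactly $2(2+1/\ell)\tau(f)$ rather than the naive $O(\ell)\cdot\tau(f)$.
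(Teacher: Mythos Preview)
Your proposal has a genuine structural gap: you try to cover $f$ \emph{entirely} by $\ell$ successive area-covers and claim that ``$f_{\ell+1}$ has no positive biset.'' This is not what area-covers achieve, and the paper does not attempt it. Covering $(f_i)_{R_i}$ only $f$-covers bisets with $A\cap R_i=\emptyset$; the residual $f_{i+1}$ can (and typically will) have positive bisets of arbitrarily large inner part that meet $R_i$. Your assertion that ``the maximal $f_{i+1}$-positive bisets have pairwise disjoint inner parts'' is not justified for symmetric crossing supermodular $f$ (two maximal positive bisets may fail to cross yet still intersect, since $A^+\cup B^+=V$ is possible), and even if one restricts attention to small bisets, the factor $2k_f^2-3k_f+2$ in the paper is a \emph{growth} bound on $|U(f^I,k_f)\cup R|$ (Corollary~\ref{c:size}), not a contraction rate for anything that eventually reaches zero. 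There is no mechanism in your outline that kills the large positive bisets.

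The paper's route is different in two essential respects. First, each iteration computes \emph{two} area-covers: $I$ for $f_{R_i}$ and $I'$ for $f_{V\setminus R_{i+1}}$ with $R_{i+1}\supseteq U(f^I,k_f)\cup R_i$; together $J_i=I\cup I'$ makes $f^{J_i}$ independence-free (Corollary~\ref{c:cross-skew}), so the remaining positive bisets---including all large ones---are then covered by the \cite{FJW} $2$-approximation for positively skew-supermodular functions, at cost $\le 2\tau$. Second, the cost bound $c(J_i)\le 2\tau(1+1/\ell)$ is obtained by computing $\ell$ candidate pairs $J_1,\dots,J_\ell$ with nested $R_1\subset R_2\subset\cdots$, writing $c(J_i)\le 2\tau+(\bar\gamma_i-\gamma_i)-(\bar\gamma_{i+1}-\gamma_{i+1})$, and telescoping; the telescoping exists precisely because each $J_i$ combines a cover on the $R_i$-side with one on the $(V\setminus R_{i+1})$-side. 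Your single-cover-per-round scheme gives only $c(I_i)\le \delta_i+2\bar\gamma_i$, which neither telescopes nor sums to $O(1)\cdot\tau$ without the complementary $I'$-term. The $n$-bounds in the theorem then arise from ensuring $|R_{\ell+1}|\le n-r$ after $\ell$ growth steps, not from any shrinking process.
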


\section{Covering crossing supermodular functions (Theorem~\ref{t:bfuc})} \label{s:bfuc}

A biset function $f$ is {\bf positively skew-supermodular} if the supermodular inequality or 
the {\bf co-supermodular inequality} $f(\A \sem \B)+f(\B \sem \A) \geq f(\A)+f(\B)$ holds for $f$-positive bisets.

The corresponding {\bfuc} problem, when $f$ is posi\-tively skew-supermodular, admits ratio $2$ 
(assuming the Biset-LP can be solved in polynomial time) 
\cite{FJW}; see also \cite{CVegh,FNR} for a simpler proof along the proof line of \cite{NRS} for the set functions case.

We say that $\A,\B$ {\bf co-cross} if $\A \sem \B$ and $\B \sem \A$ are both non-void,
and that $\A,\B$ {\bf independent} if they do not cross nor co-cross.
One can verify that $\A,\B$ are independent if and only if at least one of the following holds:
$A \subs \p\B$, or $A^* \subs \p\B$, or $B \subs \p\A$, or $B^* \subs \p\A$.
A biset function $f$ is {\bf independence-free} if no pair of $f$-positive bisets are independent. 
It is easy to see that if $f$ is symmetric and if $|A| \geq k_f$ holds for every $f$-positive biset $\A$ then $f$ is independence-free.

\begin{lemma} [\cite{JJ2}] \label{l:co}
Let $f$ be a symmetric crossing supermodular biset function. If $\A,\B$ are not independent
then the supermodular or the co-supermodular ine\-quality holds for $\A,\B$ and $f$.
Thus if $f$ is independence-free then $f$ is positively skew-supermodular. 
\end{lemma}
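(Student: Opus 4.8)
The plan is to split on the reason why $\A,\B$ fail to be independent. If $\A,\B$ cross, there is nothing to prove: the supermodular inequality for $\A,\B$ holds by the very definition of crossing supermodularity. Hence the real content lies in the remaining case, in which $\A,\B$ co-cross, i.e. $\A\sem\B$ and $\B\sem\A$ are both non-void; here I would derive the co-supermodular inequality $f(\A\sem\B)+f(\B\sem\A)\ge f(\A)+f(\B)$.

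The device is to apply crossing supermodularity not to $\A,\B$ but to the pair $\A,\B^*$, where $\B^*=(V\sem B^+,V\sem B)$ is the co-biset of $\B$. First I would verify that $\A$ and $\B^*$ genuinely cross: the inner part of $\A\cap\B^*$ is $A\cap(V\sem B^+)=A\sem B^+$, which is exactly the inner part of $\A\sem\B$ and hence non-empty because $\A\sem\B$ is non-void; and $A^+\cup(V\sem B)\ne V$ is equivalent to $B\sem A^+\ne\empt$, which says that the inner part of $\B\sem\A$ is non-empty. Then a direct computation of bisets gives $\A\cap\B^*=(A\sem B^+,A^+\sem B)=\A\sem\B$ and, after a short check of both coordinates, $\A\cup\B^*=(\B\sem\A)^*$. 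Substituting these into the crossing-supermodular inequality $f(\A\cap\B^*)+f(\A\cup\B^*)\ge f(\A)+f(\B^*)$ and using the symmetry of $f$ twice — once to rewrite $f(\B^*)=f(\B)$ and once to rewrite $f((\B\sem\A)^*)=f(\B\sem\A)$ — yields precisely the desired co-supermodular inequality.

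The final assertion is then immediate: if $f$ is independence-free, then every pair of $f$-positive bisets is not independent, so by the first part the supermodular or the co-supermodular inequality holds for every such pair, which is exactly the definition of positively skew-supermodular. I do not expect a genuine obstacle here; the only care needed is the bookkeeping of inner and outer parts when checking that $\A,\B^*$ cross and that $\A\cup\B^*=(\B\sem\A)^*$, together with placing the two symmetry substitutions in the right spots.
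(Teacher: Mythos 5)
Your proposal is correct and follows essentially the same route as the paper: split into the crossing and co-crossing cases, and in the latter apply crossing supermodularity to the pair $\A,\B^*$, using $\A \sem \B = \A \cap \B^*$, the identity $\A \cup \B^* = (\B \sem \A)^*$, and the symmetry of $f$ to convert the resulting inequality into the co-supermodular one. The extra verification that $\A,\B^*$ indeed cross is exactly the (implicit) step in the paper's argument, so there is nothing to add.
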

\begin{proof}
If $\A,\B$ cross then the supermodular inequality holds for $\A,\B$. Assume that $\A,\B$ co-cross.
Then $\A$ and $\B^*$ cross, and thus the supermodular inequality holds for $\A,\B^*$ and $f$. 
Note that (i) $\A \sem \B=\A \cap \B^*$; (ii) $\A \cup \B^*$ is the co-biset of $\B \sem \A$,
hence $f(\A \cup \B^*)=f(\B \sem \A)$, by the symmetry of $f$. Thus we get
$f(\A \sem \B)+f(\B \sem \A)=f(\A \cap \B^*)+f(\A \cup \B^*) \geq f(\A)+f(\B^*)=f(\A)+f(\B)$.
\qed
\end{proof}

This suggests a two phase strategy for covering an ``almost'' supermodular function $f$.
First, find a ``cheap'' edge set $J$ such that the residual function $f^J$ will be independence-free so $f^J$ will have ``good uncrossing properties''. 
Second, use some ``known'' algorithms to cover $f^J$. 
The idea is due to Jackson and Jord\'{a}n \cite{JJ2}, and it is also the basis of the algorithm of Cheriyan and V\'{e}gh \cite{CVegh}
(see also \cite{N-R} where the same idea was used for a related problem).
Specifically, if $f$ is crossing supermodular,
we will seek a cheap $J$ that $f$-covers all bisets $\A$ with $|A| \leq k_f-1$;
by Lemma~\ref{l:co} the residual function $f^J$ will be positively skew-supermodular
so we can use the $2$-approximation algorithms of \cite{FJW} to cover~$f^J$. 



The algorithm of Cheriyan and V\'{e}gh \cite{CVegh} finds an edge $J$ as above of cost $\leq 4\tau(f)$, by covering two fan functions. 
Our algorithm covers a pair of area functions. In fact, we will cover a sequence of $\ell \geq \ell$ pairs of area functions, 
and with the help of Lemma~\ref{l:area} will show that the sum of their costs is at most $2\tau(\ell'+1)$;
we choose the cheapest pair cover that will have cost $\leq 2\tau(1+1/\ell')$.

For an integer $p$ let $U(f,p)=\bigcup\{A:f(\A)>0,|A| \leq p\}$ be the union of inner parts of size $\leq p$ of the $f$-positive bisets.
Note that if $R \subs V \sem U(f,p)$ and if $I$ covers $f_R$ then $f^I(\A) \leq 0$ whenever $|A| \leq p$.
Thus from Lemma~\ref{l:co} we get:

\begin{corollary} \label{c:cross-skew}
If $R \subs V \sem U(f,k_f)$ and if $I'$ is an $f_R$-cover, then the resi\-dual function $f^{I'}$ of $f$ w.r.t. $I'$ 
is independence-free and thus is positively skew-supermodular.
\end{corollary}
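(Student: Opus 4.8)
The plan is to read the corollary off the remark that immediately precedes it together with Lemma~\ref{l:co}. First I would verify that remark in the case $p=k_f$. If $\A$ is $f$-positive with $|A| \le k_f$, then by definition $A \subs U(f,k_f)$, so $A \cap R = \empt$ by the hypothesis $R \subs V \sem U(f,k_f)$; hence the area function satisfies $f_R(\A) = f(\A) - \max_\A f(\A)\cdot|A \cap R| = f(\A)$, and since $I'$ covers $f_R$ we get $|\de_{I'}(\A)| \ge f_R(\A) = f(\A)$, i.e. $f^{I'}(\A) \le 0$. If instead $|A| \le k_f$ but $f(\A) \le 0$, then trivially $f^{I'}(\A) \le f(\A) \le 0$. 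Thus every $f^{I'}$-positive biset $\A$ satisfies $|A| \ge k_f + 1$.

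Next I would assemble the properties of the residual function that are needed to invoke Lemma~\ref{l:co}. The function $f^{I'}$ is symmetric, because $f$ is and, for the undirected edge set $I'$ (equivalently, for its underlying undirected set), $\de_{I'}(\A) = \de_{I'}(\A^*)$ for every biset $\A$; and $f^{I'}$ is crossing supermodular, being the sum of the crossing supermodular function $f$ and the supermodular function $-|\de_{I'}(\cdot)|$. Moreover $f^{I'} \le f$ pointwise, so every $f^{I'}$-positive biset is $f$-positive and hence $k_{f^{I'}} \le k_f$; combined with the previous paragraph, every $f^{I'}$-positive biset $\A$ satisfies $|A| \ge k_f + 1 \ge k_{f^{I'}}$. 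By the observation stated just before Lemma~\ref{l:co} (a symmetric biset function all of whose positive bisets $\A$ have $|A|$ at least the corresponding ``$k$''-parameter is independence-free), $f^{I'}$ is independence-free, and then Lemma~\ref{l:co} applied to the symmetric crossing supermodular function $f^{I'}$ yields that it is positively skew-supermodular.

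I do not expect a genuine obstacle here, as this is a corollary of already-proved facts. The only point that needs a moment's care is the bookkeeping $k_{f^{I'}} \le k_f$, which is what lets the bound ``$|A| \ge k_f+1$'' produced by the area-function argument dominate $k_{f^{I'}}$ and thereby trigger the independence-freeness criterion. A second, purely notational point worth flagging is that $I'$ must be treated as an undirected edge set (or one passes to its underlying undirected edges) so that $f^{I'}$ is symmetric; otherwise Lemma~\ref{l:co} would not apply.
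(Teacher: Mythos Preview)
Your proposal is correct and follows exactly the route the paper intends: you verify the preceding remark in the case $p=k_f$ to get $|A|\ge k_f+1$ for every $f^{I'}$-positive biset, note that $f^{I'}$ inherits symmetry and crossing supermodularity with $k_{f^{I'}}\le k_f$, invoke the independence-freeness criterion stated just before Lemma~\ref{l:co}, and then apply Lemma~\ref{l:co}. The paper compresses all of this into the single sentence ``Thus from Lemma~\ref{l:co} we get'', so your write-up simply spells out the details that the paper leaves implicit.
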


Thus we just need to find $R \subs V \sem U(f,k_f)$ with $|R| \geq k_f$ and 
compute a $2$-approximate cover of $f_R$ -- the residual function will be independence-free and thus positively skew-supermodular.
However, such $R$ may not exist, e.g., for $f=f_{k{\sf\mbox{-}CS}}$ we have $U(f,k)=V$.
The idea of Cheriyan and V\'{e}gh \cite{CVegh} resolves this difficulty as follows: 
first find a ``cheap''  edge set $I$ such that $|U(f^I,k_f)| \leq n-k_f$ will hold for the residual function $f^I$,
and only then compute for $f^I$ an edge set $I'$ as in Corollary~\ref{c:cross-skew}.
Then the function $f^{I \cup I'}$ is independence-free and thus is positively skew-supermodular.

Variants of the next lemma were proved in \cite{CVegh,FNR} (our bound is just slightly better), and we use it to show that $I$ as above 
can be a cover of an area function, provided that $n$ is large enough.
Let us say that a biset family $\FF$ is {\bf weakly posi-uncrossable} if for any $\A,\B \in \FF$ such that 
both bisets $\A \sem \B,\B \sem \A$ are non-void, one of them is in $\FF$. If $f$ is crossing
supermodular and symmetric then the family $\FF$ of $f$-positive bisets is weakly posi-uncrossable, see \cite{CVegh,FNR}.

\begin{lemma} [\cite{CVegh,FNR}] \label{l:FNR}
Let $\FF$ be a weakly posi-uncrossable biset family, 
let $p=\max_{\A \in \FF} |A|$, $q=\max_{\A \in \FF} |\p\A|$, $U=\bigcup_{\A \in \FF} A$,
and let $\nu$ be the maximum number of pairwise inner part disjoint bisets in $\FF$.
Then $|U| \leq \nu[(2q(p-1)+p]$.
\end{lemma}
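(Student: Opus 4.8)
The plan is to follow the strategy of \cite{CVegh,FNR} and tighten the bookkeeping. Fix a maximum collection $\A_1,\dots,\A_\nu\in\FF$ of bisets with pairwise disjoint inner parts, and choose it, among all such collections, so as to maximize $\sum_i|A_i|$. By maximality of $\nu$ the inner part of every member of $\FF$ meets $\bigcup_i A_i$; in particular $\bigcup_i A_i\subs U$, and since $\sum_i|A_i|\le\nu p$ it suffices to bound the ``overspill'' $|U\sem\bigcup_i A_i|$ by $2\nu q(p-1)$. I would prove the statement by induction on $\nu$, so that the inductive hypothesis may be applied to a sub‑family with a smaller $\nu$.

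For the inductive step, set $\FF'=\{\A\in\FF:A\cap A_\nu=\empt\}$. Then $\FF'$ is again weakly posi‑uncrossable (an uncrossing $\A\sem\B$ of members of $\FF'$ has inner part contained in $A$, hence still avoids $A_\nu$), the bisets $\A_1,\dots,\A_{\nu-1}$ lie in $\FF'$, and any collection of pairwise disjoint bisets in $\FF'$ can be enlarged by $\A_\nu$, so $\nu(\FF')\le\nu-1$. The point of removing the bisets that meet $A_\nu$ is that now every inner part occurring in $\FF'$ is disjoint from $A_\nu$, whence $A_\nu\cap U(\FF')=\empt$. By the inductive hypothesis $|U(\FF')|\le(\nu-1)\bigl(2q(p-1)+p\bigr)$, so the proof reduces to the following claim: the inner parts of the bisets of $\FF$ that meet $A_\nu$ cover at most $2q(p-1)+p$ vertices outside $U(\FF')$ (the same claim, with $\FF'$ empty, is the base case $\nu=1$).

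To prove the claim I would repeatedly ``trim against the current core''. Start with the core $\A_\nu$, which contributes its at most $p$ vertices (all outside $U(\FF')$). For a biset $\A\in\FF$ meeting $A_\nu$, look at $\A\sem\A_\nu$, whose inner part $A\sem A_\nu^+$ is disjoint from $A_\nu^+$. If $\A\sem\A_\nu$ is void then $A\subs A_\nu^+$, so $\A$ exposes only vertices of $\p\A_\nu$; if $\A\sem\A_\nu$ is non‑void and lies in $\FF$, then it lies in $\FF'$ and so its inner part is already inside $U(\FF')$, and again $\A$ exposes only vertices of $\p\A_\nu$. In the remaining case $\A_\nu\sem\A\in\FF$; here $\A_\nu\sem\A$ has inner part a proper subset of $A_\nu$ of size at most $|A_\nu|-1$, and I would restart the analysis with this smaller biset playing the role of the core, and iterate. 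Each restart strictly decreases the size of the current core, so there are at most $|A_\nu|-1\le p-1$ of them, and each exposes at most the boundaries of two bisets, i.e.\ at most $2q$ new vertices; together with the initial $\le p$ vertices of $A_\nu$ this gives $2q(p-1)+p$. The extremal choice of the disjoint collection is what rules out the alternative in the last case that $\A$ itself — reaching far beyond $A_\nu^+$ while still disjoint from $A_1,\dots,A_{\nu-1}$ — could have been swapped in to enlarge $\sum_i|A_i|$.

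The main obstacle is the last case of the claim: organizing the iterated trimming so that every exposed vertex is counted once, verifying that the number of restarts is genuinely at most $p-1$, and dealing with a witness $\A$ that meets several cores simultaneously (so it cannot be moved into the disjoint collection directly) — this is exactly the place where one must combine weak posi‑uncrossability with the extremality of the collection. Everything else is routine manipulation of bisets via the identities of Fact~\ref{f:AB}, and the final count is $|U|\le\nu p+\nu\cdot 2q(p-1)=\nu\bigl(2q(p-1)+p\bigr)$.
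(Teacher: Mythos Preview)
Your inductive reduction to a claim about the bisets meeting $A_\nu$ is clean, but the trimming argument for the claim has a real gap that you yourself flag and do not resolve. First, the case split is incomplete: in the ``remaining case'' you assert $\A_\nu\sem\A\in\FF$, but weak posi-uncrossability only yields this when $\A_\nu\sem\A$ is non-void; if $A_\nu\subseteq A^+$ while $\A\sem\A_\nu$ is non-void and not in $\FF$, uncrossability gives nothing, and extremality of $\sum_i|A_i|$ (which only says $|A|\le|A_\nu|$ or $A$ meets some $A_j$, $j<\nu$) does not bound the vertices of $A$ outside $U(\FF')\cup A_\nu^+$. Second, even when $\A_\nu\sem\A\in\FF$, the ``restart with a smaller core'' handles one $\A$ at a time: different case-3 witnesses shrink the core in different directions, and a biset $\B$ that meets $A_\nu$ only inside $A^+\cap A_\nu$ no longer meets the new core $\A_\nu\sem\A$ and so escapes the recursion. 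There is no single descending chain of at most $p-1$ cores that services every such $\A$ simultaneously, so the $2q(p-1)$ count does not follow from what you have written.

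The paper's proof is non-inductive and sidesteps both issues. It takes an inclusion-minimal subfamily of $\FF$ still covering $U$, picks a private vertex $v_i$ in each $A_i$, and passes to a minimal $\C_i\in\FF$ with $v_i\in C_i$ and $\C_i\subs\A_i$. Weak posi-uncrossability together with the minimality of the $\C_i$ forces, for every $i\ne j$, either $v_i\in\p\C_j$ or $v_j\in\p\C_i$, or else $C_i\cap C_j=\empt$. The first alternative defines an auxiliary digraph of maximum in-degree $q$; its underlying graph is $2q$-degenerate and hence $(2q+1)$-colorable into classes of pairwise inner-part-disjoint $\C_i$'s, each class of size at most $\nu$. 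One class contributes at most $\nu p$ vertices to $U$ and each of the remaining at most $2q$ classes contributes at most $\nu(p-1)$ new ones, giving the bound directly.
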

\begin{proof}
Let $\FF'$ be an inclusion minimal subfamily of $\FF$ such that $\bigcup_{\A \in \FF'} A = U$.
By the minimality of $|\FF'|$, for every $\A_i \in \FF'$ there is $v_i \in A_i$ such that $v_i \notin A_j$ for every $j \neq i$. 
For every $i$ let $\C_i$ be an inclusion minimal member of the family $\{\C \in \FF: \C \subs \A_i,v_i \in C\}$,
where here $\A \subs \B$ means that $A \subs B$ and $A^+ \subs B^+$.
Since $\FF$ is weakly posi-uncrossable, the minimality of $\C_i$ implies that exactly one of the following holds for any $i \neq j$: 
\begin{itemize}
\item
$v_i \in \p\C_j$ or $v_j \in \p\C_i$;
\item
$\C_i=\C_i \sem \C_j$ or $\C_j=\C_j \sem \C_i$. 
\end{itemize}
Construct an auxiliary directed graph ${\cal J}$ on node set $\CC=\{\C_i:\A_i \in \FF'\}$.
Add an arc $\C_i\C_j$ if $v_i \in \p\C_j$. The in-degree in ${\cal J}$ of a node $\C_i$ is at most
$|\p\C_i| \leq q$.
Thus every subgraph of the underlying graph of ${\cal J}$ has a node of degree $\leq 2q$.
A graph is $d$-degenerate if every subgraph of it
has a node of degree $\leq d$. It is known that any $d$-degenerate graph is $(d+1)$-colorable.
Hence ${\cal J}$ is $(2q+1)$-colorable, so its node set can be partitioned into at most $2q+1$ independent sets,
say $\CC_1, \CC_2, \ldots$, where the bisets in each independent set are pairwise inner part disjoint.
W.l.o.g. we may assume that $\CC_1$ is a maximal subfamily in $\CC$ of pairwise inner part disjoint bisets,
so any $\C \in \CC \sem \CC_1$ intersects some biset in $\CC_1$.
Let $\FF'_i$ be the subfamily of $\FF'$ that corresponds to $\CC_i$, so $|\FF'_i|=|\CC_i| \leq \nu$.
Let $U_i=\cup_{\A \in \FF_i} A$.
An easy argument shows that $|U_1| \leq \nu p$ and that $|U_i \sem U_1| \leq \nu(p-1)$ for $i \geq 2$.
Consequently, $|U| \leq \nu p +2q \nu(p-1)$, as claimed.
\qed
\end{proof}

\begin{corollary} \label{c:size}
If $f$ is symmetric crossing supermodular and if $I$ is a cover of $f_R$ then $|U(f^I,k_f) \cup R| \leq |R|(2k_f^2-3k_f+2)$.
\end{corollary}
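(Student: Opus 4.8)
The plan is to deduce this from Lemma~\ref{l:FNR}, applied to a biset family whose inner parts union up to exactly $U(f^I,k_f) \cup R$.

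I would first record two preliminary observations. First, $f^I=f-d_I$ is symmetric crossing supermodular: $f$ is such by assumption, while for the undirected edge set $I$ the function $d_I(\cdot)$ is symmetric ($\de_I(\A)=\de_I(\A^*)$) and $-d_I$ is supermodular, hence crossing supermodular; so $f^I$ is a sum of two symmetric crossing supermodular functions. Consequently the family of all $f^I$-positive bisets is weakly posi-uncrossable, as noted in the excerpt for symmetric crossing supermodular functions. Second, since $I$ covers $f_R$ and $f_R(\A)=f(\A)$ for every biset $\A$ with $A \cap R=\empt$, every such $\A$ has $f^I(\A)=f(\A)-|\de_I(\A)| \leq 0$; equivalently, $A \cap R \neq \empt$ holds for every $f^I$-positive biset $\A$.

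Next I would set $\FF=\{\A : f^I(\A)>0,\ |A| \leq k_f\} \cup \{(\{r\},\{r\}):r \in R\}$, so that $\bigcup_{\A \in \FF}A=U(f^I,k_f) \cup R$, and verify that $\FF$ is weakly posi-uncrossable. Passing from the family of all $f^I$-positive bisets to those with $|A| \leq k_f$ preserves the property, because the inner part of $\A \sem \B$ is contained in $A$ and that of $\B \sem \A$ in $B$, so it still has size $\leq k_f$. Adjoining the singletons $(\{r\},\{r\})$ also preserves it: for any biset $\B$, the biset $(\{r\},\{r\}) \sem \B$ is void when $r \in B^+$ and equals $(\{r\},\{r\}) \in \FF$ when $r \notin B^+$, so it never forces a new member into the family. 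Now apply Lemma~\ref{l:FNR} to this $\FF$. Every $f^I$-positive biset is $f$-positive, hence has $|\p\A| \leq k_f-1$, and the singletons have empty boundary, so $q=\max_{\A\in\FF}|\p\A| \leq k_f-1$; likewise $p=\max_{\A\in\FF}|A| \leq k_f$; and $\nu \leq |R|$, since by the second observation (and trivially for the singletons) the inner part of every member of $\FF$ meets $R$, so pairwise inner-part-disjoint members pick out distinct points of $R$. Lemma~\ref{l:FNR} then yields $|U(f^I,k_f) \cup R| \leq \nu(2q(p-1)+p) \leq |R|(2(k_f-1)^2+k_f)=|R|(2k_f^2-3k_f+2)$.

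The one step that needs care is the ``$\cup R$'': merely bounding $|U(f^I,k_f)|$ and then adding $|R|$ would lose an additive $|R|$ term, so $R$ has to be folded into the family via the trivial singletons, and one must check that this destroys neither weak posi-uncrossability nor the bounds $p \leq k_f$, $q \leq k_f-1$, while still keeping $\nu \leq |R|$. Everything else is routine bookkeeping on top of Lemma~\ref{l:FNR}.
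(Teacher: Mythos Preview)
Your proof is correct and follows essentially the same route as the paper: apply Lemma~\ref{l:FNR} with $p=k_f$, $q=k_f-1$, after observing that every $f^I$-positive biset has inner part meeting $R$ (since $I$ covers $f_R$), which bounds $\nu$ by $|R|$.

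The only difference is in how the ``$\cup R$'' is handled. You enlarge $\FF$ by adjoining the singleton bisets $(\{r\},\{r\})$ for $r\in R$ and then verify that weak posi-uncrossability and the bounds on $p,q,\nu$ survive. The paper instead keeps $\FF$ as just the small $f^I$-positive bisets, uses the sharper bound $\nu\le r':=|U(f^I,k_f)\cap R|$ (valid because each inner part meets $R$ \emph{inside} $U(f^I,k_f)$), obtains $|U(f^I,k_f)|\le r'(2(k_f-1)^2+k_f)$, and then adds the $r-r'$ points of $R\setminus U(f^I,k_f)$; the algebra $r'\cdot[2(k_f-1)^2+k_f]+(r-r')\le r\cdot[2(k_f-1)^2+k_f]$ shows no additive loss. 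So your worry that ``merely bounding $|U(f^I,k_f)|$ and then adding $|R|$ would lose an additive $|R|$ term'' is circumvented in the paper by using $r'$ rather than $r$ in the first step. Your singleton trick is a perfectly valid alternative that trades that bit of arithmetic for a short closure-property check.
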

\begin{proof}
Denote $r=|R|$, $r'=|U(f^I,k_f) \cap R|$, and $k=k_f$.
Substituting $q+1=p=k_f$ and observing that $\nu \leq r'$ in Lemma~\ref{l:FNR} we get
$$
|U(f^I,k_f) \cup R| \leq r'[(2(k-1)^2+k]+(r-r') \leq r[(2(k-1)^2+k]=r(2k^2-3k+2)
$$
as required.
\qed
\end{proof}

Let us skip for a moment implementation details, 
and focus on bounding the cost of an edge set $J$ computed  by the following algorithm.

\medskip \medskip

\begin{algorithm}[H]
\caption{{\sc Growing-Cover$(\h{G},c,f)$}}
\label{alg:GC} 
let $\empt \neq R_1 \subset V$ \\
\For{$i=1$ to $\ell$}
{
$I \gets${\sc Area-Cover$(\h{G},c,f,R_i)$} \\
$R_{i+1} \gets U(f^I,k_f) \cup R_i$ \\
$I' \gets${\sc Area-Cover$(\h{G},c,f,V \sem R_{i+1})$} \\ 
$J_i \gets I \cup I'$ 
}
return the cheapest edge set $J$ among the edge sets $J_1, \ldots,J_\ell$ computed
\end{algorithm}

\medskip \medskip

Let us fix some optimal Biset-LP solution $x$. 
For an edge set $F$ the $x$-cost of $F$ is defined as $\sum_{e \in F} c_ex_e$.  Let us use the following notation:
\begin{itemize}
\item
$\tau=\sum_{e \in \h E} c_ex_e$ is the optimal solution value.
\item
$\ga_i$ is the $x$-cost of the edges with both ends in $R_i$.
\item
$\de_i$ is the $x$-cost of the edges with one end in $R_i$ and the other in $V \sem R_i$.
\item
$\b{\ga}_i$ is the $x$-cost of the edges with both ends in $V \sem R_i$.
\end{itemize}

Clearly, for any $i$ we have
$$
\tau=\ga_i+\de_i+\b{\ga}_i
$$
By Lemma~\ref{l:area}, the cost of the covers $I,I'$ computed at iteration $i$ is bounded by
\begin{eqnarray*}
c(I) & \leq & \de_i+2\b\ga_i \\
c(I') & \leq & \de_{i+1}+2\ga_{i+1}
\end{eqnarray*}
Thus we get 
\begin{eqnarray*}
c(J_i) & \leq & (\de_i+2\b{\ga}_i)+(\de_{i+1}+2\ga_{i+1}) \\
         &   =   & (\de_i+\b\ga_i+\ga_i)+(\b\ga_i-\ga_i) + (\de_{i+1}+\ga_{i+1}+\b\ga_{i+1})-(\b\ga_{i+1}-\ga_{i+1})  \\
         &   =   & 2\tau+(\b{\ga}_i-\ga_i)-(\b{\ga}_{i+1}-\ga_{i+1}) 
\end{eqnarray*}
Summing this over $\ell$ iterations and observing that the sum is telescopic we get 
\begin{eqnarray*}
\sum_{i=1}^\ell c(J_i) & \leq & 2\ell \tau+\sum_{i=1}^\ell [(\b{\ga}_i-\ga_i)-(\b{\ga}_{i+1}-\ga_{i+1})] \\
                                    &   =   & 2\ell \tau +(\b{\ga}_1-\ga_1)-(\b{\ga}_{\ell+1}-\ga_{\ell+1})  \\
																		&   =   & 2\ell \tau +(\b{\ga}_1+\ga_{\ell+1})-(\ga_1+\b{\ga}_{\ell+1}) \\
																		&   =   & 2\tau(\ell+1)-(2\ga_1+\de_1+\b{\ga}_{\ell+1}+\de_{\ell+1})
\end{eqnarray*}
Thus there exists an index $i$ such that
$$
c(J_i) \leq 2\tau(1+1/\ell)
$$
Note that if $R_{i+1}=R_i$ for some $i$ then $c(J_i) \leq c(I)+c(I') \leq 2\de_i+2\ga_i+2\b\ga_i = 2\tau$, 
hence in this case the algorithm can terminate with $J=J_i$ and $c(J) \leq 2\tau$.

Next we use Corollary~\ref{c:size} to lower bound $n$ to ensure that the algorithm will have $\ell$ iterations.
Let $r=|R_1|$, and $r$ is also a lower bound on $n-|R_\ell|$.
To see the bounds on $n$ in Theorem~\ref{t:bfuc} note the following.
\begin{itemize}
\item
In the case of intersecting supermodular $f$ we choose $r=2k_f-1$ 
and need $r(2k_f^2-3k_f+2)^\ell \leq n-r$, 
namely, $n \geq (2k_f-1)[(2k_f^2-3k_f+2)^\ell +1]$.
\item
If $f$ is obtained by zeroing an intersecting supermodular function $g$ on  co-void bisets we choose $r=k_g$ 
and need $r(2k_g^2-3k_g+2)^\ell \leq n-r$, 
namely, $n \geq k[(2k_g^2-3k_g+2)^\ell+1]$.
\item
When $f=f_{k{\sf\mbox{-}CS}}$, \cite{FNR} shows a choice of $R_1$ such that $|R_2| \leq k^3-k$.
We need $(k^3-k)(2k^2-3k+2)^{\ell-1} \leq n-k$,
namely $n \geq k[(k^2-1)(2k^2-3k+2)^{\ell-1}+1]$.
\end{itemize}

To get a polynomial time implementation we need to find in step~4 the set $R_{i+1}=R_i \cup U(f^I,k_f)$ in polynomial time.
We modify the algorithm by relaxing the step~4 condition $R_{i+1}=R_i \cup U(f^I,k_f)$ to 
$R_i \subseteq R_{i+1} \subseteq R_i \cup U(f^I,k_f)$ (so $R_1 \subs R_2 \subs \ldots$ will be a nested family),
but require that for each $J_i=I \cup I'$ the algorithm will compute a cover $F_i$ of $f^{J_i}$ of cost $c(F_i) \leq 2\tau(f^{J_i})$.
This can be done in the same way as in \cite{CVegh}, as follows.

The iterative rounding $2$-approximation algorithm of \cite{FJW} for covering a posi\-tively skew supermodular biset function, 
when applied on an arbitrary biset function $h$, 
either returns a $2$-approximate cover $J$ of $h$, or a {\bf failure certificate}: 
a pair $\A,\B$ of bisets with $h(\A) >0$ and $h(\B)>0$ for which both the supermodular and the co-supermodular inequality does not hold.
In our case this can happen only if $\A,\B$ are independent, by Lemma~\ref{l:co}. 

Now consider some iteration $i$ of the algorithm.
Since $f^{I \cup I'}$ is symmetric, then by interchanging the roles of $\A,\A^*,\B,\B^*$,
we can assume w.l.o.g. that our failure certificate $\A,\B$ satisfies $A \subs \p\B$.
We thus apply the following procedure.
Start with $R_{i+1}=R_i$.
Then iteratively, find $I'$ as in step~5 and apply the $2$-approximation algorithm of \cite{FJW} for covering $h=f^{I \cup I'}$;
if the algorithm returns an edge set $F$ of cost $c(F) \leq 2\tau(h)$, 
we keep the current $R_{i+1}$,  set $J_i \gets I \cup I'$ and $F_i \gets F$, and continue to the next iteration.
Else, we have a failure certificate pair $\A,\B$ of $h$-positive bisets with $A \subs \p\B$. 
Then $A \subs U(f^I,k_f)$ and $A \sem R_{i+1} \neq \empt$ (since $I'$ $f$-covers bisets whose inner part is contained in $R_{i+1}$),
and we can apply the same procedure with a larger candidate set $R_{i+1} \gets R_{i+1} \cup A$.

This concludes the proof of Theorem~\ref{t:bfuc}.


\end{document}